\def\a{{\mathbf a}}
\def\u{{\mathbf u}}
\def\g{{\mathbf g}}
\def\h{{\mathbf h}}
\def\y{\mathbf{y}}
\def\x{\mathbf{x}}
\def\X{\mathbf{X}}
\def\P{\mathbf{\Phi}}
\def\e{\mathbf{e}}
\def\R{\mathbb{R}}
\def\A{\mathcal{A}}
\def\B{\mathcal{B}}
\def\H{\mathcal{H}}
\def\S{\mathcal{S}}
\newcommand{\refeq}[1] {equation (\ref{#1})}
\newtheorem{thm}{Theorem}
\newtheorem{lem}[thm]{Lemma}
\newtheorem{cor}[thm]{Corollary}
\begin{document}

%
%
\title{\bf\vspace{-39pt}Compressed Sensing with Nonlinear Observations and Related Nonlinear Optimisation Problems}

%
%

\author{Thomas Blumensath \\ \small University of Oxford\\ \small Centre for Functional Magnetic Resonance Imaging of the Brain  \\
\small J R Hospital, Oxford, OX3 9DU, UK \\ \small tblumens@fmrib.ox.ac.uk\\}

%
%
\date{}

%
%
\maketitle
\thispagestyle{fancy}

%
%
\markboth{\footnotesize \rm \hfill T. BLUMENSATH \hfill}
{\footnotesize \rm \hfill Nonlinear Compressed Sensing \hfill}

%

%


%
\begin{abstract}
Non-convex constraints have recently proven a valuable tool in many optimisation problems. In particular sparsity constraints have had a significant impact on sampling theory, where they are used in Compressed Sensing and allow structured signals to be sampled far below the rate traditionally prescribed.  

Nearly all of the theory developed for Compressed Sensing signal recovery assumes that samples are taken using linear measurements. In this paper we instead address the Compressed Sensing recovery problem in a setting where the observations are non-linear. We show that, under conditions similar to those required in the linear setting, the Iterative Hard Thresholding algorithm can be used to accurately recover sparse or structured signals from few non-linear observations. 

Similar ideas can also be developed in a more general non-linear optimisation framework. In the second part of this paper we therefore present related result that show how this can be done under sparsity and union of subspaces constraints, whenever a generalisation of the Restricted Isometry Property traditionally imposed on the Compressed Sensing system holds. 
\vspace{5mm} \\
\noindent {\it Key words and phrases} : Compressed Sensing, Nonlinear Optimisation, Non-Convex Constraints, Inverse Problems
\vspace{3mm}\\
\end{abstract}
%

\section{Introduction}
\label{section:intro}
Compressed Sensing \cite{candes06robust, donoho06formost, candes08cs} deals with the acquisition of finite dimensional sparse signals. Let $\x$ be a sparse vector of length $N$ and assume we sample $\x$ using $M$ \emph{linear} measurements. The $M$ samples can then be collected into a vector $\y$ of length $M$ and the sampling process can be described by a matrix $\P$. If the observations are noisy, then the Compressed Sensing observation model is
\begin{equation}
\y=\P\x+\e,
\end{equation}
where $\e$ is the noise vector. If $M<N$, then such a linear system is not uniquely invertible in general, unless we use additional assumptions on $\x$. Sparsity of $\x$ is such an assumption and Compressed Sensing theory tells us that, for certain $\P$, we can recover $\x$ from $\y$ even if $M<<N$, given that $\x$ has roughly $O(M)$ non-zero elements. However, in general, recovery of $\x$ is a combinatorial problem which is known to be NP-hard. Fortunately, under stricter conditions on $\P$, a range of different polynomial time algorithms can be used to recover $\x$ whenever $\x$ has roughly $O(M/log(N))$ non-zero elements.

One of the conditions that guarantees that we can use efficient algorithms is the \emph{Restricted Isometry Property} (RIP). A matrix $\P$ satisfies the \emph{Restricted Isometry Property} of order $2k$ \cite{candes06robust} if
\begin{equation}
(1-\delta) \|\x_1+\x_2\|^2 \leq \|\P(\x_1+\x_2)\|^2 \leq (1+\delta)  \|\x_1+\x_2\|^2
\end{equation}
for all $k$-sparse $\x_1$ and $\x_2$. The \emph{Restricted Isometry Constant} $\delta$ is defined as the smallest constant for which this property holds. One important interpretation of the RIP is in terms of the Lipschitz property of $\P$ and its inverse (where the inverse is defined only for sparse vectors and their image under $\P$) \cite{blumensath10subsp} and the condition states that, not only is $\P$ invertible on the set of sparse signals, this inverse is also smooth.

The RIP condition is a sufficient condition for the recovery of sparse $\x$. For example, \cite{candes08RIP} has shown that, for any $\x$, given an observation $\y=\P\x+\e$, where $\P$ has the Restricted Isometry Property with $\delta<\sqrt{2}-1$, then the solution $\x^\star$ to the convex optimisation problem 
\begin{equation}
\min_{\tilde\x} \|\tilde\x\|_1 \ : \ \|\y-\P\tilde\x\|_2 \leq \|\e\|_2
\end{equation}
has an error bounded by
\begin{equation}
\|\x^\star-\x\| \leq  c k^{-0.5} \|\x-\x_{k}\|_1 + c' \|\e\|,
\end{equation}
where $\|\cdot\|_1$ is the vector 1 norm, $\x_k$ is the best $k$ term approximation to $\x$ and where $c$ and $c'$ are two constants depending only on $\delta$.

Similar results have been obtained for other algorithms, such as the Compressed Sampling Matching Pursuit (CoSaMP) and Subspace Pursuit (SP) algorithms \cite{needell08COSAMP, dai08subspace} and the Iterative Hard Thresholding (IHT) algorithm \cite{blumensath09IHT}.

Several generalisations to this now classical Compressed Sensing setup have been introduced over the years. For example, some of the recovery algorithms, such as CoSaMP, SP and IHT, can be adapted to allow signals $\x$ to lie in a much more general, \emph{non-convex} constraint set $\A$. A powerful model here is for example the Union of Subspaces model, in which $\x$ is assumed to lie on one of several linear subspaces $\A_i$, though it is not known a priori on which subspace we are to look. Not only does this framework include the standard sparse model as a special instance, many other models of interest, such as analogue Compressed Sensing methods \cite{mishali09blind}, low rank matrix models \cite{goldfarb10convergence}, or structured sparse models \cite{baraniuk09model}, are also covered.

In this more general setting, with a general \emph{non-convex} constraint sets $\A$, Compressed Sensing can be formulated as the following optimisation problem,
\begin{equation}
\mathrm{argmin}_{\x\in\A} \|\y-\P\x\|_2^2,
\end{equation}
that is,  we search a vector $\x$ from the \emph{non-convex} constraint set $\A$ that minimises the sum of squares observation error. 

In this paper we look at a much more general setting, where we want to find the following optimum.
\begin{equation}
\label{equation:nonlinopt} 
\mathrm{argmin}_{\x\in\A} f(\x),
\end{equation}
where $f(\x)$ is now a much more general non-linear function of $\x$.

Of particular interest to us are \emph{non-linear Compressed Sensing} problems where $f(\x)=\|\y-\P(\x)\|$, with $\P(\x)$ being a non-linear mapping from one vector space to another. We address this non-linear Compressed Sensing problem first, however, the more general problem in \refeq{equation:nonlinopt} is of independent interest and an alternative treatment will be presented in the second part of this paper.

When we started studying these problems, not much was known of this general setting. However, since the first draft of this paper \cite{blumensath10nonlin}, similar ideas have been put forward independently in \cite{Xu11nonlin}, where the non-linear Compressed Sensing problem was tackled using a convexification approach, and in \cite{Bahmani12nonlin}, where non-convex optimisation problems were studied using an alternative greedy approach to the one discussed here. Whilst the first part of this paper contains more recent results, the second part of this paper is basically the same material that can be found in the earlier draft of this paper \cite{blumensath10nonlin}.


\section{Non-Linear Compressed Sensing}


We are here interested in the development of a better understanding of what happens to the Compressed Sensing recovery problem when a signal is measured with some non-linear system. In particular, the hope is that, if the system is not too non-linear, then recovery should still be possible under similar assumption to those made in linear Compressed Sensing. To see the intuition behind why this might work, it is worth pointing out that in the linear setting, Compressed Sensing recovery works exactly in those cases in which the observation system is a bi-Lipschitz embedding. This means that, both, the observation mapping itself, as well as its inverse are Lipschitz functions. Obviously, these functions are only Lipschitz on the constraint set $\A$ and its image $\P\A$. In the linear setting, if $\P$ is bounded (e.g. in finite dimensional spaces), then $\P$ itself is obviously Lipschitz. The idea is now that, if Compressed Sensing works if both forward and backward maps are Lipschitz, maybe we can move away from a linear setting, where $\P$ is linear, and instead assume $\P$ to be Lipschitz, but non-linear.

The study of non-linear observation systems is not only of academic interest but has important implications for many real-world sampling systems, where measurement system can often not be designed to be perfectly linear.
Assume therefore that our measurements are described by a nonlinear mapping $\P(\cdot)$ that maps elements of the normed vector spaces $\H$ into the normed vector spaces $\B$. The observation model is therefore  
\begin{equation}
\y=\P(\x)+\e,
\end{equation}
where $\e\in \B$ is an unknown but bounded error term. Both $\H$ and $\B$ are assumed to be Hilbert spaces.

\subsection{The Constraints}

As in Compressed Sensing, the interesting case occurs whenever the sampling system $\P$ is non-invertible or ill-conditioned. To cope with this, additional constraints need to be imposed on $\x$. Again, in the interest of generality, instead of restricting our discussion to sparse signals (however these might be defined in a general Hilbert spaces) we here use the more general framework of \cite{blumensath10subsp}  and assume that $\x$ lies in or close to a known set $\A$, where $\A\subset \H$ is a non-convex subset of $\H$. 
Of particular interest will be constraint sets $\A$ that can be described as the union of several subspaces. For these models we can write 
\begin{equation}
\A=\bigcup \A_i,
\end{equation}
where we use arbitrary closed subspaces $\A_i\subset\H$
 
One approach to recover $\x$ from $\y$ would be to mirror Compressed Sensing ideas and to define a convex objective function which can then be optimised using standard tools. However, for our general setup, it is not clear how this could be done. Instead, we use the Iterative Hard Thresholding (IHT) algorithm. To define this for general constraint sets $\A$, we again replace the hard thresholding step with a more general map which can be understood as a form of projection \cite{blumensath10subsp}.
Let $P_{\A}$ be a map from $\H$ to $\A$ such that  
\begin{equation}
\label{definition:projection}
\x_\A = P_{\A}(\x) \  : \  \x_\A\in\A, \  \|\x-\x_\A\|^2\leq\inf_{\hat{\x}\in\A}\|\x-\hat{\x}\|^2 + \epsilon.
\end{equation}
In this definition we have introduced an arbitrarily small constant $\epsilon> 0$, as there might not exist an $\x_{opt}$, such that $\|\x-\x_{opt}\|^2=\inf_{\hat{\x}\in\A}\|\x-\hat{\x}\|^2$. However, for simplicity, we will assume for the rest of this paper that $\A$ is a so called proximal set, which is just a fancy way of saying that the required optimal points indeed lie in the set $\A$, so that we use $\epsilon=0$ here. Nevertheless, it is easy to adapt our theory to the more general setting.

Note that this "projection" might not be defined uniquely in general, as for a given $\x$, there might be several elements $\x_\A$ that satisfy the condition in (\ref{definition:projection}). However, all we require here is that the map $P_{\A}(\x)$ returns \emph{a single element} from the set of admissible $\x_\A$ (which is guaranteed to be non-empty \cite{blumensath10subsp}). How this selection is done is of no consequence for our arguments here.

It is further worth noting that the relaxation offered by an $\epsilon>0$ in the definition of the above projection has also a computational advantage. Instead of having to compute exact optima, which for many problems are often difficult to find, many approximate algorithms can be used instead (see \cite{cevher11clash} for a more detailed discussion). 



\subsection{The Iterative Hard Thresholding Algorithm for Non-Linear Compressed Sensing}
\label{subsection:algo}

For the linear Compressed Sensing problem, the Iterative Hard Thresholding (IHT) algorithm uses the following iteration
\begin{equation}
\x^{n+1}=P_\A(\x^n+\mu \P^*(\y-\P\x^{n}),
\end{equation}
where $\P$ is the linear measurement operator. 

In the non-linear case, let us approximate $\P(\x)$ using an affine Taylor series type approximation around a point $\x^\star$, so that $\P(\x)\approx\P(\x^\star)+\P_{\x^\star}(\x-\x^\star)$, where $\P_{\x^\star}$ is a linear operator (such as the Jacobian of $\P(\x)$, evaluated at point $\x^\star$). The matrix $\P_{\x^\star}$ thus will depend on $\x^\star$ in general. At iteration $n$ we then write the IHT algorithm as
\begin{equation}
\x^{n+1}=P_\A(\x^n+\mu\P_{\x^n}^*(\y-\P(\x^{n})).
\end{equation}
Indeed, as we show below in \ref{proof:1}, this algorithm can recover $\x$ under similar condition to those required from the IHT algorithm in the linear setting. All we require is that the matrices $\P_{\x^\star}$ satisfy a Restricted Isometry Property and that the error introduced in the linearisation is not too large, i.e. that $\|\P(\x_\A)-\P(\x_\x^n)-\P_{\x_\A}(\x_\A-\x^n)\|$ is small for large $n$.
\begin{thm}
\label{thm:main1}
Assume that $\y=\P(\x)+\e$ and that $\P_{\x^\star}$ is a linearisation of $\P(\cdot)$ at $\x^\star$ so that the Iterative Hard Thresholding algorithm uses the iteration $\x^{n+1}=P_\A(\x^n+\mu\P_{\x^{n}}^*(\y-\P_{\x^{n}}\x^{n})$.
Assume that $\P_{\x^\star}$ satisfies RIP
\begin{equation}
\alpha\|\x_1-\x_2\|_2^2 \leq \alpha\|\P_{\x^\star}(\x_1-\x_2)\|_2^2 \leq \beta \|\x_1-\x_2\|_2^2 
\end{equation}
for all $\x_1,\x_2, \x^\star\in\A$, with constants satisfying $\beta\leq 1/\mu < 1.5\alpha$. 
Define
\begin{equation}
\e_\A^{n}=\y-\P(\x^n)-\P_{\x^n}( \x_\A -\x^n )
\end{equation}
and  $\epsilon^k = b \sum_{n=0}^{k-1} a^{k-1-n} \|\e_\A^{n}\|^2 $, where $b=4/\alpha$ and $a=2/(\mu\alpha)-2$,
then after 
\begin{equation}
k^\star = \left\lceil 2 \frac{\ln(\delta\frac{\sqrt{\epsilon^k}  }{\|\x_\A\|})}{\ln(2/(\mu\alpha)-2)} \right\rceil
\end{equation}
iterations we have
\begin{equation}
\|\x-\x^{k^\star}\| \leq (1+\delta)\sqrt{\epsilon^k} + \|\x_{\A}-\x\|.
\end{equation}
\end{thm}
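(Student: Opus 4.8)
The plan is to reduce the whole statement to a single one-step contraction of the form $\|\x^{n+1}-\x_\A\|^2 \le a\|\x^n-\x_\A\|^2 + b\|\e_\A^n\|^2$ and then to unroll this recursion. Writing $\d^n=\x^n-\x_\A$, the starting point is the (near-)optimality of the projection in \refeq{definition:projection} with $\epsilon=0$: since $\x_\A\in\A$ and $\x^{n+1}=P_\A(\g^n)$ with $\g^n=\x^n+\mu\P_{\x^n}^*(\y-\P(\x^n))$ (the residual $\y-\P(\x^n)$ matching the error term and the iteration of Section~\ref{subsection:algo}), we have $\|\g^n-\x^{n+1}\|^2\le\|\g^n-\x_\A\|^2$. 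Expanding the square and cancelling $\|\g^n-\x_\A\|^2$ gives the basic inequality $\|\d^{n+1}\|^2\le 2\langle\g^n-\x_\A,\d^{n+1}\rangle$. I would then substitute the gradient step together with the definition of $\e_\A^n$: since $\y-\P(\x^n)=\e_\A^n+\P_{\x^n}(\x_\A-\x^n)$, a direct computation (using $\x_\A-\x^n=-\d^n$) yields
\[
\g^n-\x_\A=(\I-\mu\P_{\x^n}^*\P_{\x^n})\d^n+\mu\P_{\x^n}^*\e_\A^n,
\]
so that the linearisation error is cleanly packaged into $\e_\A^n$.

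The core of the argument is to bound the two resulting inner products using the RIP. The vectors $\d^n$ and $\d^{n+1}$ are differences of elements of $\A$ and therefore lie in a sum of (at most three) of the subspaces $\A_i$; restricted to this subspace the symmetric operator $\I-\mu\P_{\x^n}^*\P_{\x^n}$ has eigenvalues in $[1-\mu\beta,\,1-\mu\alpha]$, which are all nonnegative because $\mu\beta\le1$, so its operator norm there is $1-\mu\alpha$. Hence $|\langle(\I-\mu\P_{\x^n}^*\P_{\x^n})\d^n,\d^{n+1}\rangle|\le(1-\mu\alpha)\|\d^n\|\|\d^{n+1}\|$ and, moving the adjoint across, $\mu|\langle\P_{\x^n}^*\e_\A^n,\d^{n+1}\rangle|=\mu|\langle\e_\A^n,\P_{\x^n}\d^{n+1}\rangle|\le\mu\sqrt{\beta}\|\e_\A^n\|\|\d^{n+1}\|$. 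Applying Young's inequality to each product, with weights chosen so that the $\|\d^{n+1}\|^2$ contributions can be absorbed onto the left-hand side, produces exactly the recursion above; the hypothesis $\beta\le1/\mu$ guarantees the weights are admissible and lets the $\beta$-dependence in the noise term be subsumed into $b=4/\alpha$ (using $\mu^2\beta\le\mu\le1/\alpha$), while $1/\mu<1.5\alpha$, i.e. $\mu\alpha>2/3$, is precisely the condition that makes $a=2/(\mu\alpha)-2<1$, so that the iteration contracts.

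With the recursion in hand the rest is bookkeeping. Unrolling from the usual initialisation $\x^0=0$ gives $\|\d^k\|^2\le a^k\|\x_\A\|^2+\epsilon^k$, where $\epsilon^k=b\sum_{n=0}^{k-1}a^{k-1-n}\|\e_\A^n\|^2$ is exactly the accumulated-error term in the statement. I would then choose the number of iterations so that the transient $a^k\|\x_\A\|^2$ is dominated by $\delta^2\epsilon^k$; solving $a^{k^\star}\|\x_\A\|^2\le\delta^2\epsilon^k$ for $k^\star$, and remembering that $\ln a<0$ reverses the inequality, gives precisely the stated value of $k^\star$. At that point $\|\d^{k^\star}\|^2\le(1+\delta^2)\epsilon^k\le(1+\delta)^2\epsilon^k$, hence $\|\x_\A-\x^{k^\star}\|\le(1+\delta)\sqrt{\epsilon^k}$, and the triangle inequality $\|\x-\x^{k^\star}\|\le\|\x-\x_\A\|+\|\x_\A-\x^{k^\star}\|$ delivers the claimed bound.

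I expect the main obstacle to be the first step, namely establishing the one-step contraction with the exact constants $a$ and $b$. Two points need care. First, the geometry of the union-of-subspaces model: the RIP is stated only for differences of two elements of $\A$, yet the inner product $\langle(\I-\mu\P_{\x^n}^*\P_{\x^n})\d^n,\d^{n+1}\rangle$ mixes $\d^n$ and $\d^{n+1}$, which generically live on three subspaces, and $\P_{\x^n}^*\P_{\x^n}\d^n$ may leave that subspace; one must therefore either strengthen the hypothesis to sums of three subspaces or exploit that $\d^{n+1}$ already lies in the relevant subspace and project onto it before applying Cauchy--Schwarz. Second, the operator $\P_{\x^n}$ changes at every iteration, so the eigenvalue bound has to be invoked uniformly over all base points $\x^\star\in\A$, which is exactly what the stated RIP provides. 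The genuinely nonlinear content is thus entirely hidden inside $\e_\A^n$, and the conclusion is only as strong as one's control of these linearisation residuals.
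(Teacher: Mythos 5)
Your high-level skeleton (one-step contraction, unrolling, choice of $k^\star$, triangle inequality) coincides with the paper's, and your starting point $\|\d^{n+1}\|^2\le 2\langle\g^n-\x_\A,\d^{n+1}\rangle$ is valid. But the way you produce the contraction is not the paper's argument, and as it stands it does not prove the stated theorem. Bounding $\langle(\I-\mu\P_{\x^n}^*\P_{\x^n})\d^n,\d^{n+1}\rangle$ by $(1-\mu\alpha)\|\d^n\|\,\|\d^{n+1}\|$ requires the RIP to hold on the whole span $\Gamma=\A_i+\A_j+\A_k$ containing both $\d^n$ and $\d^{n+1}$: the compression $P_\Gamma(\I-\mu\P_{\x^n}^*\P_{\x^n})P_\Gamma$ has spectrum in $[1-\mu\beta,1-\mu\alpha]$ only if $\alpha\|v\|^2\le\|\P_{\x^n}v\|^2\le\beta\|v\|^2$ for \emph{every} $v\in\Gamma$. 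The theorem's hypothesis gives this only for $v=\x_1-\x_2$ with $\x_1,\x_2\in\A$, i.e.\ on sums of \emph{two} subspaces; a generic element of a sum of three subspaces is not of that form. You flag this obstacle yourself, but neither of your proposed fixes closes it within the stated hypothesis: strengthening to a three-subspace RIP changes the theorem, and projecting onto the subspace $\Gamma'$ containing $\d^{n+1}$ before Cauchy--Schwarz still leaves you with $\|P_{\Gamma'}(\I-\mu\P_{\x^n}^*\P_{\x^n})P_{\Gamma''}\|$, which is again controlled only by an RIP on $\Gamma'+\Gamma''$ (the union-of-subspaces analogue of $\|P_{T_1}(\I-\P^*\P)P_{T_2}\|\le\delta_{|T_1\cup T_2|}$ in the sparse case). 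The paper avoids cross terms entirely: it (i) passes to measurement space via the RIP lower bound, $\|\x_\A-\x^{n+1}\|^2\le\frac{1}{\alpha}\|\P_{\x^n}(\x_\A-\x^{n+1})\|^2$, (ii) splits off $\e_\A^n$ using $2ab\le a^2+b^2$, and (iii) bounds the remaining residual $\|\y-\P(\x^n)-\P_{\x^n}(\x^{n+1}-\x^n)\|^2$ by exploiting that $\x^{n+1}$ minimises the surrogate $-\langle\x-\x^n,\g\rangle+\frac{1}{\mu}\|\x-\x^n\|^2$ over $\A$, evaluating that surrogate at $\x_\A$, and using the RIP upper bound on $\x^{n+1}-\x^n$ and the lower bound on $\x_\A-\x^n$. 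Every RIP application there is to a difference of two elements of $\A$, exactly as hypothesised.

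Second, even if you grant yourself the three-subspace RIP, the constants do not come out as you claim. From your one-step bound $\|\d^{n+1}\|\le 2(1-\mu\alpha)\|\d^n\|+2\mu\sqrt{\beta}\|\e_\A^n\|$, no choice of Young weights yields $\|\d^{n+1}\|^2\le a\|\d^n\|^2+b\|\e_\A^n\|^2$ with $a=2/(\mu\alpha)-2$ and $b=4/\alpha$ over the full admissible parameter range. Take $\mu=1$, $\beta=1$, $\alpha=0.7$ (allowed, since $\beta\le 1/\mu<1.5\alpha$): your bound reads $\|\d^{n+1}\|\le 0.6\|\d^n\|+2\|\e_\A^n\|$, and the inequality $(0.6x+2e)^2\le 0.857\,x^2+5.714\,e^2$ requires the cross term $2.4\,xe$ to be dominated by $0.497\,x^2+1.714\,e^2$, which fails because $2.4>2\sqrt{0.497\times 1.714}\approx 1.85$. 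So your route proves a statement of the same shape but with a strictly larger noise amplification, i.e.\ with $\epsilon^k$ built from different constants; it does not deliver $(1+\delta)\sqrt{\epsilon^k}+\|\x_\A-\x\|$ with the $a$ and $b$ defined in the theorem. Both defects trace to the same point: the difficulty is not only hidden in $\e_\A^n$, as you suggest, but in how the cross terms between successive iterates are handled, and that is precisely what the paper's surrogate-objective step is designed to do.
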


Obviously, for the above theorem to make sense, we would require the error term $\epsilon^k$ to be well behaved. This is true whenever  $\|\y-\P(\x^n)-\P_{\x^n}( \x_\A -\x^n) \|_2$ is bounded, as then $\epsilon^k\leq b \sum_{n=0}^{k-1} a^{k-1-n} C$, for some constant $C$ so that the requirement that $1/\mu< 1.5\alpha$ ensures that $a=2/(\mu\alpha)-2<1$, which in turn implies that the geometric series $\sum_{n=0}^{k-1} a^{k-1-n} $ is bounded.

Indeed, if $a<1$ and if we can show that $\epsilon_n=\|\e_\A^{n}\|^2$ is bounded and convergent to some $\epsilon_{lim}$, then $\epsilon^k$ will also be bounded as the following argument shows
\begin{eqnarray}
\epsilon^k /b &=& \sum_{n=0}^{k-1}a^{k-n-1}\epsilon_n \nonumber\\
&=& \sum_{n=0}^{p-1}a^{k-n-1}\epsilon_n + \sum_{p}^{k-1}a^{k-n-1}\epsilon_n  \nonumber\\
&\leq& \sum_{n=0}^{p-1}a^{k-n-1}\epsilon_n + \epsilon_p \sum_{p}^{k-1}a^{k-n-1}   \nonumber\\
&\leq& \sum_{n=0}^{p-1}a^{k-n-1}\epsilon_n + \epsilon_p \frac{1}{1-a}\nonumber\\
&=& a^{k-p-1} \sum_{n=0}^{p-1}a^{p-n-1}\epsilon_n + \epsilon_p \frac{1}{1-a}\nonumber\\
&\leq& a^{k-p-1} \epsilon_0 \sum_{n=0}^{p-1}a^{p-n-1} + \epsilon_p \frac{1}{1-a} \nonumber\\
&\leq& a^{k-p-1} \frac{\epsilon_0}{1-a}  + \frac{\epsilon_p }{1-a} 
\end{eqnarray}
Thus, if we let $k$ and $p$ increase to infinity such that $0<k-p\rightarrow\infty$, then the first term on the left converges to zero whilst the second term converges to a limit depending on $\epsilon_{lim}$, so that, if we iterate the algorithm long enough, then 
\begin{eqnarray}
\lim_{k\rightarrow\infty} \epsilon^ k&\leq& \epsilon_{lim}\frac{b }{1-a} 
\end{eqnarray}
and the error term converges to
to \begin{equation}
\|\x-\x^\star\| \leq \sqrt{\epsilon_{lim}\frac{b }{1-a} } + \|\x_{\A}-\x\|.
\end{equation}

Actually, as shown in \ref{proof:2}, more can be said if we can establish the following bound for $\P(\x)$ and its linearisation $\|\P(\x_1)-\P(\x_2) - \P_{\x_1}(\x_1-\x_2) \|_2^2 \leq C \|\x_1-\x_2\|_2^2$.
\begin{cor}
\label{col:1}
Assume that $\y=\P(\x)+\e$ and that $\P_{\x^\star}$ is a linearisation of $\P(\cdot)$ at $\x^\star$ so that the Iterative Hard Thresholding algorithm uses the iteration $\x^{n+1}=P_\A(\x^n+\mu\P_{\x^{n}}^*(\y-\P_{\x^{n}}\x^{n})$.
Assume that $\P_{\x^\star}$ satisfies RIP
\begin{equation}
\alpha\|\x_1-\x_2\|_2^2 \leq \|\P_{\x^\star}(\x_1-\x_2)\|_2^2 \leq \beta \|\x_1-\x_2\|_2^2 
\end{equation}
for all $\x_1,\x_2, \x^\star\in\A$, and assume $\P(\x)$ and $\P_\x$ satisfy
\begin{equation}
\|\P(\x_1)-\P(\x_2) - \P_{\x_1}(\x_1-\x_2) \|_2^2 \leq C \|\x_1-\x_2\|_2^2, 
\end{equation}
with constants satisfying $\beta\leq 1/\mu < 1.5\alpha-4C$, 
then the algorithm converges to a solution $\x^\star$ that satisfies
\begin{equation}
\|\x-\x^\star\| \leq c   \|\e_\A\| + \|\x_{\A}-\x\|,
\end{equation}
where $\e_\A=\y-\P(\x_\A)$ and $c=\frac{2}{0.75\alpha-1/\mu-2C}$.
\end{cor}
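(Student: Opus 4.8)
The plan is to derive Corollary~\ref{col:1} from Theorem~\ref{thm:main1} by using the extra linearisation hypothesis to convert the accumulated residual term $\epsilon^k$ (which depends on the a-priori unknown sequence $\e_\A^n$) into a closed bound involving only $\e_\A = \y - \P(\x_\A)$. The whole gain of the Corollary over the Theorem is that the per-step residual $\e_\A^n$ can now be controlled by the current recovery error $\|\x^n - \x_\A\|$.

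First I would decompose the per-step residual. Writing $\y = \P(\x_\A) + \e_\A$,
\[
\e_\A^n = \e_\A + \big(\P(\x_\A) - \P(\x^n) - \P_{\x^n}(\x_\A - \x^n)\big),
\]
and applying the new hypothesis with $\x_1 = \x^n$, $\x_2 = \x_\A$ bounds the bracketed term by $\sqrt{C}\,\|\x^n - \x_\A\|$. The triangle inequality then yields the single new estimate $\|\e_\A^n\| \le \|\e_\A\| + \sqrt{C}\,\|\x^n - \x_\A\|$, and hence the split $\|\e_\A^n\|^2 \le 2\|\e_\A\|^2 + 2C\,\|\x^n - \x_\A\|^2$.

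Next I would feed this back into the contraction that drives the Theorem. The quantity $\epsilon^k = b\sum_n a^{k-1-n}\|\e_\A^n\|^2$ is exactly the unrolling of the one-step recursion $\|\x^{n+1}-\x_\A\|^2 \le a\|\x^n - \x_\A\|^2 + b\|\e_\A^n\|^2$ with $a = 2/(\mu\alpha)-2$ and $b = 4/\alpha$ (this is also the recursion used in the limit computation displayed after the Theorem). Substituting the split above turns this into the self-referential recursion
\[
\|\x^{n+1}-\x_\A\|^2 \le (a + 2bC)\,\|\x^n - \x_\A\|^2 + 2b\,\|\e_\A\|^2 .
\]
The effective contraction factor is $a + 2bC = 2/(\mu\alpha) - 2 + 8C/\alpha$, and demanding $a + 2bC < 1$ is, after multiplying through by $\alpha$, precisely the strengthened hypothesis $1/\mu < 1.5\alpha - 4C$ of the Corollary. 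Under this condition $\|\x^n - \x_\A\|$ stays bounded and the iterates converge to a fixed point $\x^\star$; summing the geometric series exactly as in the argument following the Theorem, but now with $\|\e_\A^n\|^2$ replaced by its bound, sends the homogeneous (initialisation) part to zero and leaves $\|\x^\star - \x_\A\|^2 \le \frac{2b}{1 - a - 2bC}\,\|\e_\A\|^2$, i.e.\ $\|\x^\star - \x_\A\| \le c\,\|\e_\A\|$. The claim then follows from $\|\x - \x^\star\| \le \|\x - \x_\A\| + \|\x_\A - \x^\star\|$.

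The routine part is the decomposition; the delicate part is the self-referential nature of the recursion together with the exact constant. Because $\|\e_\A^n\|$ itself carries a term proportional to $\|\x^n - \x_\A\|$, this term gets absorbed into the contraction factor and erodes the convergence margin, so one must verify simultaneously that $a + 2bC < 1$ and that the resulting constant reduces to $c = \frac{2}{0.75\alpha - 1/\mu - 2C}$. I expect the main effort to lie in this constant bookkeeping: a direct fixed-point analysis of the squared recursion gives a bound of exactly the claimed form $\|\x^\star - \x_\A\| \le c\,\|\e_\A\|$, and reconciling the denominator with the precise expression $0.75\alpha - 1/\mu - 2C$ requires tracking the bounding choices (the $2\|\e_\A\|^2 + 2C\|\x^n-\x_\A\|^2$ split, and the $(1+\delta)$ and initialisation factors of the Theorem) carefully. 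One should also confirm that the limit $\x^\star$ genuinely exists rather than merely bounding $\limsup_n \|\x^n - \x_\A\|$, which is where the asymptotic contractivity together with completeness of $\H$ enters.
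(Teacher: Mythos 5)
Your proposal is correct and follows essentially the same route as the paper's own proof: decompose $\e_\A^n=\e_\A+\left(\P(\x_\A)-\P(\x^n)-\P_{\x^n}(\x_\A-\x^n)\right)$, bound the second term by $\sqrt{C}\,\|\x_\A-\x^n\|$ via the new hypothesis, absorb the resulting $2C\|\x_\A-\x^n\|^2$ into the contraction factor of the one-step bound \eqref{ITERBOUND} (which is exactly where $1/\mu<1.5\alpha-4C$ comes from), and then iterate the geometric recursion. Your closing worry about reconciling the constant is well founded, but the discrepancy lies in the paper rather than in your argument: the paper's proof stops at ``we can iterate the error and the corollary follows'', and the recursion it (and you) set up actually yields $\|\x^\star-\x_\A\|\leq\sqrt{2/(0.75\alpha-0.5/\mu-2C)}\,\|\e_\A\|$, which does not agree with the stated $c=\frac{2}{0.75\alpha-1/\mu-2C}$.
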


\subsection{Example}

Before we proof Theorem \ref{thm:main1} and Corollary \ref{col:1}, let us give a simple example that shows how the above method and theory can be applied in a particular setting. Assume we have constructed a Compressed Sensing system, where a sparse signal $\x\in\R^N$ is measured using a \emph{linear} measurement system $\overline{\P}$. Also assume that we have constructed the system so that $\overline{\P}$ satisfies the Restricted Isometry Property with constants ${\alpha}$ and ${\beta}$. Now unfortunately, the sensors we have available for the actual measurements are not exactly linear but have a slight non-linearity, so that our measurements are of the form
\begin{equation}
\y=\overline{\P}f(\x)+\e,
\end{equation}
where $f(\cdot)$ is a non-linear function applied to each element of the vector $\x$. For simplicity, we will write $f(\x)=\x+h(\x)$, where again $h(\x)$ is a function applied element wise. We then have $f'(x) = 1+ h'(x)$.

It is not difficult to see that the Jacobian of $\overline{\P}f(\x)$ can be written as
\begin{equation}
\P_{\x^\star}=\overline{\P}+\overline{\P}H'_{\x^\star},
\end{equation}
where $H'_{\x^\star}$ is the diagonal matrix with the elements $ h'(\x^\star)$ along the diagonal.

To use Corollary \ref{col:1}, we thus need to determine a) the RIP constant of $\P_{\x^\star}$ and b) bound
 $\|\P(\x_1)-\P(\x_2) - \P_{\x_1}(\x_1-\x_2) \|_2^2$ as a function of $\|\x_1-\x_2\|^2$.
 
 The RIP constants are bounded for our example as follows.
 Assume that $|h'(x)|\leq M<1$, that $\x_1,\x_2\in\A$ and that $\overline{\P}$ satisfies the RIP with constants $\alpha$ and $\beta$. We then have
 \begin{eqnarray}
 &&(\alpha^{1/2}1-\beta^{1/2}M) \|\x_1-\x_2\| \nonumber\\
&\leq& \|\overline{\P}(\x_1-\x_2)\| - \| \overline{\P}H'_{\x^\star}(\x_1-\x_2) \| \nonumber \\
 &\leq& \|\P_{\x^\star}(\x_1-\x_2)\| \nonumber\\
 &=& \|\overline{\P}(\x_1-\x_2) + \overline{\P}H'_{\x^\star}(\x_1-\x_2) \|  \nonumber\\
 &\leq& \|\overline{\P}(\x_1-\x_2)\|+ \| \overline{\P}H'_{\x^\star}(\x_1-\x_2) \| \nonumber \\
 &\leq& \beta\|(\x_1-\x_2)\|+ \beta\| H'_{\x^\star}(\x_1-\x_2) \| \nonumber\\
 &\leq& \beta^{1/2}(1+M) \|\x_1-\x_2\|, 
 \end{eqnarray}
 which proofs the following Lemma.
 \begin{lem}
Let $\P(\x)=\overline{\P}f(\x)$, where the function $f(\x)=x+h(\x)$ is applied element wise and where the derivative $h'(x)$ is absolutely bounded $|h'(\cdot)|\leq M$. Also assume that the matrix $\overline{\P}$ satisfies the RIP condition with constants $\alpha$ and $\beta$ for a set $\A$, then the matrix $\overline{\P}(\mathbf{I}+H'_{\x^\star})$ satisfies RIP with constants $(\alpha^{1/2}1-\beta^{1/2}M)^2$ and $\beta(1-M)^2$.
 \end{lem}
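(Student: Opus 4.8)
The plan is to reduce the statement to a one-vector estimate by setting $\z = \x_1-\x_2$ and exploiting the factorisation $\P_{\x^\star} = \overline{\P}(\I+H'_{\x^\star})$, so that $\P_{\x^\star}\z = \overline{\P}\z + \overline{\P}H'_{\x^\star}\z$. First I would restate the RIP hypothesis on $\overline{\P}$ in its square-rooted form, namely $\alpha^{1/2}\|\z\| \leq \|\overline{\P}\z\| \leq \beta^{1/2}\|\z\|$, since the triangle inequality works most naturally at the level of norms rather than squared norms; the squared constants asserted in the lemma will then be recovered by squaring once at the very end.

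For the upper bound I would apply the triangle inequality to obtain $\|\P_{\x^\star}\z\| \leq \|\overline{\P}\z\| + \|\overline{\P}H'_{\x^\star}\z\|$, bound each summand using the RIP upper constant $\beta^{1/2}$, and then use that $H'_{\x^\star}$ is diagonal with entries bounded by $M$ in modulus, so that $\|H'_{\x^\star}\z\| \leq M\|\z\|$. This yields $\|\P_{\x^\star}\z\| \leq \beta^{1/2}(1+M)\|\z\|$ and hence the upper RIP constant $\beta(1+M)^2$. For the lower bound I would instead use the reverse triangle inequality $\|\P_{\x^\star}\z\| \geq \|\overline{\P}\z\| - \|\overline{\P}H'_{\x^\star}\z\|$, bound the first term below by $\alpha^{1/2}\|\z\|$ and the second above by $\beta^{1/2}M\|\z\|$, arriving at $\|\P_{\x^\star}\z\| \geq (\alpha^{1/2}-\beta^{1/2}M)\|\z\|$ and thus the lower constant $(\alpha^{1/2}-\beta^{1/2}M)^2$. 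Squaring the lower estimate is only legitimate when $\alpha^{1/2}-\beta^{1/2}M \geq 0$, i.e. when $M < (\alpha/\beta)^{1/2}$; since necessarily $\alpha \leq \beta$, the stated hypothesis $M<1$ is necessary but on its own not quite sufficient, so I would flag $M<(\alpha/\beta)^{1/2}$ as the genuine nondegeneracy condition.

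The main obstacle is the justification of applying the RIP upper bound to the scaled vector $\overline{\P}H'_{\x^\star}\z$: RIP is assumed only on differences of elements of $\A$, whereas $H'_{\x^\star}\z$ need not a priori lie in such a difference set. The resolution is that $H'_{\x^\star}$ is diagonal and therefore preserves the coordinate support of $\z$; for sparse models, and more generally for unions of coordinate subspaces, $H'_{\x^\star}\z$ inherits the same support structure as $\z$, so the RIP of $\overline{\P}$ still applies. I would make this invariance explicit, and note that for a union of arbitrary (non-coordinate) closed subspaces $\A_i$ this step would instead require an additional assumption that each $\A_i$ is invariant under the diagonal scaling.
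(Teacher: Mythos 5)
Your proposal is correct and takes essentially the same route as the paper: the same decomposition $\P_{\x^\star}\z=\overline{\P}\z+\overline{\P}H'_{\x^\star}\z$, the triangle inequality for the upper bound, the reverse triangle inequality for the lower bound, and the estimate $\|H'_{\x^\star}\z\|\leq M\|\z\|$ for the diagonal factor. Your three refinements are all well taken and go beyond what the paper records: the paper's own chain ends in $\beta^{1/2}(1+M)\|\x_1-\x_2\|$, so the constant $\beta(1-M)^2$ in the lemma statement is evidently a typo for $\beta(1+M)^2$ (and the intermediate line with $\beta$ in place of $\beta^{1/2}$ is another); the genuine nondegeneracy condition for the lower constant is $M<(\alpha/\beta)^{1/2}$ rather than $M<1$; and the paper silently applies the RIP of $\overline{\P}$ to the vector $H'_{\x^\star}(\x_1-\x_2)$, which, as you observe, is justified for coordinate-sparse models (and unions of coordinate subspaces) because diagonal scaling preserves support, but would need an extra invariance assumption for a union of arbitrary closed subspaces $\A_i$.
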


Let us now turn to point b). We have the bound 
 \begin{eqnarray}
&& \|\P(\x_1)-\P(\x_2) - \P_{\x_1}(\x_1-\x_2) \|_2^2 \nonumber \\
&=&  \|\overline{\P}h(\x_1)-\overline{\P}h(\x_2)-\overline{\P}H'_{\x^\star}(\x_1-\x_2) \|_2^2 \nonumber \\
&=&  \|\overline{\P} ( h(\x_1)-h(\x_2)-H'_{\x_1}\x_1+H'_{\x_1}\x_2) \|_2^2\nonumber \\
&\leq& \beta  \|( h(\x_1)-h(\x_2)-H'_{\x_1}\x_1+H'_{\x_1}\x_2)) \|_2^2,
 \end{eqnarray}
 where in the last inequality we assume $\overline{\P}$ to satisfy the RIP property and that $h(0)=0$ (Note that if we do not assume that $h(0)=0$, then the same reults still hold, though we have to replace $\beta$ by the operator norm of $\overline{\P}$). 
 Let us introduce the function $d_{\x^\star}(\x)=h(\x)-H'_{\x^\star}\x$, so that 
  \begin{eqnarray}
 \|\P(\x_1)-\P(\x_2) - \P_{\x_1}(\x_1-\x_2) \|_2^2 &\leq& \beta  \|d_{\x_1}(\x_1)-d_{\x_1}(\x_2))) \|_2^2.
 \end{eqnarray}
 Thus if $d_{\x^\star}$ is Lipschitz for all $\x^\star\in\A$ with a small constant $K$, then the condition
 \begin{equation}
\|\P(\x_1)-\P(\x_2) - \P_{\x_1}(\x_1-\x_2) \|_2^2 \leq C \|\x_1-\x_2\|_2^2, 
\end{equation}
in Corollary \ref{col:1} holds with $C=\beta K$.

Thus it remains to show that $d_{\x_1}(\x_1)$ is Lipschitz. If the Jacobian $D_{\x_1}(\x^\star)$ of $d_{\x_1}(\x_1)$ satisfies $\|D_{\x_1}(\x+t\h)\|\leq M$ for all $0\leq t\leq 1$, then we know that
\begin{equation}
\| d_{\x_1}(\x+\h) - d_{\x_1}(\x) \| \leq M\|\h\|,
\end{equation}
so that 
\begin{equation}
\| d_{\x_1}(\x_1) - d_{\x_1}(\x_2) \| =\| d_{\x_1}(\x_2+\x_1-\x_2) - d_{\x_1}(\x_2) \| \leq M\|\x_1-\x_2\|,
\end{equation}
holds if 
\begin{equation}
\|D_{\x_1}(\x_2+t(\x_1-\x_2))\|\leq M
\end{equation}
for all $0\leq t\leq 1$.

For our simple example, we see that $D_{\x_1}$ is in fact a diagonal matrix with entries $\{D_{\x_1}(\x^\star)\}_{i,i}=h'(x_i)-h'_i(\x^\star)$, where $h'_i(\x_1)=h'(\{\x_{1}\}_i)$, so that 
\begin{equation}
\{D_{\x_1}(\x_2+t(\x_1-\x_2))\}_{i,i} = h'(\{\x_2+t(\x_1-\x_2)\}_i)-h'(\{\x_{1}\}_i).
\end{equation}
Thus if $|h'(\cdot)|$ is bounded, that is, if $|h'(\cdot)|\leq M$, then  $\|D_{\x_1}(\x)\|\leq M$.

We thus have demonstrated the following.
\begin{lem}
Let $\P(\x)=\overline{\P}f(\x)$, where the function $f(\cdot)=x+h(x)$ is applied element wise and where the derivative $h'(x)$ is absolutely bounded $|h'(\cdot)|\leq M$, then 
 \begin{equation}
 \|\P(\x_1)-\P(\x_2) - \P_{\x_1}(\x_1-\x_2) \|_2^2 \leq  C\|\x_1-\x_2\|,
\end{equation}
where $C=\beta M$.
\end{lem}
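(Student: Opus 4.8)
The plan is to compute the linearisation residual explicitly, observe that the linear part of $f$ cancels so that only the nonlinearity $h$ survives, factor out $\overline{\P}$ using its RIP, and then reduce the remaining estimate to the Lipschitz continuity of a single auxiliary map.

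First I would substitute $f(\x)=\x+h(\x)$ together with the Jacobian $\P_{\x_1}=\overline{\P}(\I+H'_{\x_1})$ into the residual. The contributions $\overline{\P}(\x_1-\x_2)$ coming from the identity part of $f$ and from $\P_{\x_1}$ cancel, leaving
\begin{equation}
\P(\x_1)-\P(\x_2)-\P_{\x_1}(\x_1-\x_2)=\overline{\P}\left(h(\x_1)-h(\x_2)-H'_{\x_1}(\x_1-\x_2)\right).
\end{equation}
Applying the upper RIP bound of $\overline{\P}$ then yields $\|\P(\x_1)-\P(\x_2)-\P_{\x_1}(\x_1-\x_2)\|_2^2\leq\beta\,\|h(\x_1)-h(\x_2)-H'_{\x_1}(\x_1-\x_2)\|_2^2$. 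One point to check is that the RIP estimate is only guaranteed for vectors in the relevant subspace set; since $h$ and $H'_{\x_1}$ act coordinatewise, the residual is supported inside the union of the supports of $\x_1$ and $\x_2$, so it lies in the admissible set and the bound is legitimate.

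Next I would introduce the auxiliary map $d_{\x^\star}(\x)=h(\x)-H'_{\x^\star}\x$ and note that the bracketed vector equals exactly $d_{\x_1}(\x_1)-d_{\x_1}(\x_2)$. This turns the whole estimate into a single question: is $d_{\x_1}(\cdot)$ Lipschitz, uniformly over $\x_1\in\A$? If so, with constant $K$, then $\beta\,\|d_{\x_1}(\x_1)-d_{\x_1}(\x_2)\|_2^2\leq \beta K^2\|\x_1-\x_2\|_2^2$, which is the desired bound with $C$ of the form $\beta K^2$.

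Finally I would establish the Lipschitz property through the Jacobian. Differentiating coordinatewise, the Jacobian $D_{\x_1}$ of $d_{\x_1}$ is diagonal with entries $h'(x_i)-h'(\{\x_1\}_i)$, and the standard mean value inequality $\|d_{\x_1}(\x_1)-d_{\x_1}(\x_2)\|\leq\big(\sup_{0\leq t\leq1}\|D_{\x_1}(\x_2+t(\x_1-\x_2))\|\big)\,\|\x_1-\x_2\|$ converts a bound on $\|D_{\x_1}\|$ along the connecting segment into the Lipschitz constant. The main obstacle is pinning down this operator norm cleanly: each diagonal entry is a \emph{difference} of two values of $h'$, so a crude triangle inequality from $|h'|\leq M$ only gives $\|D_{\x_1}\|\leq 2M$, whereas the clean constant $M$ asserted in the statement requires controlling the actual variation of $h'$ between its two arguments rather than bounding each term separately. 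Once the segment bound $\|D_{\x_1}(\x_2+t(\x_1-\x_2))\|\leq M$ is in hand, the chain of inequalities closes and delivers the stated dependence of $C$ on $\beta$ and $M$.
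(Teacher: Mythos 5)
Your proposal follows essentially the same route as the paper's own proof: cancel the linear part of $f$ so only $h$ survives, pull $\overline{\P}$ out via its upper RIP bound, rewrite the bracket as $d_{\x_1}(\x_1)-d_{\x_1}(\x_2)$ with $d_{\x^\star}(\x)=h(\x)-H'_{\x^\star}\x$, and control its Lipschitz constant through the diagonal Jacobian along the connecting segment. (The paper justifies the RIP step by assuming $h(0)=0$, else $\beta$ is replaced by the operator norm of $\overline{\P}$; your support argument is the same point made slightly differently, and it too implicitly needs $h(0)=0$.) The one place you stop short --- the ``obstacle'' of getting $\|D_{\x_1}\|\leq M$ rather than $2M$ --- is in fact a gap in the paper, not in your reasoning: the paper simply asserts that $|h'(\cdot)|\leq M$ gives $\|D_{\x_1}(\x)\|\leq M$, although each diagonal entry is the difference $h'(\{\x_2+t(\x_1-\x_2)\}_i)-h'(\{\x_1\}_i)$ of two values of $h'$, which the stated hypothesis only bounds by $2M$; the cleaner constant would need something extra, e.g.\ $0\leq h'\leq M$ or a bound on the oscillation of $h'$. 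You are also right about the squaring: if $d_{\x_1}$ is $K$-Lipschitz then the residual bound is $\beta K^2\|\x_1-\x_2\|_2^2$, whereas the paper writes $C=\beta K$ and states the lemma with a mismatched display (squared norm on the left, unsquared on the right), so the advertised $C=\beta M$ should be read as $C=\beta(2M)^2=4\beta M^2$ modulo these typographical slips. In short, your approach is the paper's approach, and the two constants you hesitated over are precisely the two places where the paper is imprecise.
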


%

\subsection{Proof of Theorem \ref{thm:main1}}
\label{proof:1}
The proof follows basically that in \cite{blumensath10subsp}, but with some important modifications to account for the non-linear setting analysed here.
\begin{proof}
As always, we start with the triangle inequality
\begin{equation}
\label{equation:chain1}
\|\x-\x^{n+1}\|\leq \|\x_{\A}-\x^{n+1}\| + \|\x_{\A}-\x\|
\end{equation}
and then bound the first term on the left using the definition 
\begin{equation}
\e_\A^{n}=\y-\P(\x^n)-\P_{\x^n}( \x_\A -\x^n )
\end{equation}
 and the  inequalities
\begin{eqnarray}
& &\|\x_{\A}-\x^{n+1}\|^2 \nonumber \\ 
&\leq& \frac{1}{\alpha}  \|\P_{\x^{n}}(\x_{\A}-\x^{n+1})\|^2 \nonumber \\ 
& = & \frac{1}{\alpha}   \|\y - \P(\x^n) - \P_{\x^{n}}(\x^{n+1}-\x^n) - (\y-\P(\x^n)-\P_{\x^n}( \x_\A -\x^n )  ) \|^2 \nonumber \\ 
& = & \frac{1}{\alpha} \|\y - \P(\x^n) - \P_{\x^{n}}(\x^{n+1}-\x^n) -\e_\A^{n} \|^2 \nonumber \\ 
& = &\frac{1}{\alpha} \left(\|\y - \P(\x^n) - \P_{\x^{n}}(\x^{n+1}-\x^n) \|^2 + \|\e_\A^{n}\|^2 -2\langle\e_\A^{n},(\y- \P(\x^n) - \P_{\x^{n}}(\x^{n+1}-\x^n))\rangle \right)\nonumber \\ 
&\leq& \frac{1}{\alpha} \left(\|\y- \P(\x^n) - \P_{\x^{n}}(\x^{n+1}-\x^n)\|^2 + \|\e_\A^{n}\|^2 +  \|\e_\A^{n}\|^2 + \|\y- \P(\x^n) - \P_{\x^{n}}(\x^{n+1}-\x^n)\|^2 \right) \nonumber \\
& = & \frac{2}{\alpha}  \|\y- \P(\x^n) - \P_{\x^{n}}(\x^{n+1}-\x^n)\|^2 +  \frac{2}{\alpha}  \|\e_\A^{n}\|^2.
\label{equation:chain2}
\end{eqnarray}
We here used the fact that 
\begin{eqnarray}
&&-\langle \e_\A^{n},(\y- \P(\x^n) - \P_{\x^{n}}(\x^{n+1}-\x^n))\rangle \nonumber\\
&\leq& \|\e_\A^{n}\| \|\y- \P(\x^n) - \P_{\x^{n}}(\x^{n+1}-\x^n)\| \nonumber \\
&\leq& 0.5(\|\e_\A^{n}\|^2+\|(\y- \P(\x^n) - \P_{\x^{n}}(\x^{n+1}-\x^n)\|^2). \nonumber
\end{eqnarray}

The left term in the last line of \eqref{equation:chain2} is bounded by the next inequality
\begin{equation}
\label{equation:chain3}
\|\y-\P(\x^n)-\P_{\x^{n}}(\x^{n+1}-\x^n)\|^2\leq (\frac{1}{\mu}-\alpha)\|(\x_\A-\x^n)\|^2 + \|\e_\A^{n}\|^2 ,
\end{equation}
which is a result of the following argument in which we use $\g = 2\P_{\x^{n}}^*(\y-\P(\x^{n}) )$
\begin{eqnarray}
& &  \| \y-\P(\x^n)-\P_{\x^{n}}(\x^{n+1}-\x^n)\|^2 - \|\y-\P(\x^n)\|^2\nonumber \\
& =   & -\langle(\x^{n+1}-\x^n), \g\rangle + \|\P_{\x^{n}}(\x^{n+1}-\x^n) \|^2 	\nonumber \\
&\leq & -\frac{2}{\mu}\langle(\x^{n+1}-\x^n), \frac{\mu}{2}\g\rangle + \frac{1}{\mu}\|(\x^{n+1}-\x^n) \|^2 	\nonumber \\
& =   & \frac{1}{\mu}\left[ \|\x^{n+1}-\x^n-\frac{\mu}{2}\g\|^2 -\frac{\mu}{2}\|\g\|^2  \right]\nonumber \\
& = 	& \frac{1}{\mu} \left[\inf_{\x\in\A} \|\x-\x^n-\frac{\mu}{2}\g\|^2 -\frac{\mu}{2}\|\g\|^2  \right]  				\nonumber \\
& = 	& \inf_{\x\in\A} \left[  -\langle(\x-\x^n),\g\rangle + \frac{1}{\mu}\|(\x-\x^n) \|^2   \right]  \nonumber \\
&\leq	& -\langle(\x_\A-\x^n),\g\rangle + \frac{1}{\mu}\|(\x_\A-\x^n) \|^2 	 \nonumber \\
& =   & -2\langle(\x_\A-\x^n),\P_{\x^{n}}^*(\y-\P(\x^n))\rangle + \frac{1}{\mu}\|\x_\A-\x^n \|^2 	 \nonumber \\
& =   & -2\langle(\x_\A-\x^n),\P_{\x^{n}}^*(\y-\P(\x^n))\rangle + \alpha \|\x_\A-\x^n \|^2 		  \nonumber \\
& 		& +(\frac{1}{\mu}-\alpha)\|\x_\A-\x^n \|^2 							 \nonumber \\ 
&\leq & -2\langle(\P_{\x^{n}}(\x_\A-\x^n)),(\y-\P(\x^n))\rangle + \|\P_{\x^{n}}(\x_\A-\x^n) \|^2  \nonumber \\
& 		& +(\frac{1}{\mu}-\alpha)\|\x_\A-\x^n \|^2 									\nonumber \\ 
& =   & \| \y-\P(\x^n)-\P_{\x^{n}}(\x_\A-\x^n)\|^2-\|\y-\P(\x^{n})\|^2 \nonumber\\ & & +(\frac{1}{\mu}-\alpha)\|\x_\A-\x^n \|^2 	\nonumber \\ 
& =   & \|\e^{n}_\A\|^2-\|\y-\P(\x^{n})\|^2 + (\frac{1}{\mu}-\alpha)\|(\x_\A-\x^n) \|^2 	,
\end{eqnarray}
where the first and last inequalities are due to the RIP property of $\P_{\x^n}$ and the choice of $\beta\leq\frac{1}{\mu}$, whilst the second inequality is due to the fact that $\x_\A\in\A$.

We have thus shown that
\begin{equation}
\label{ITERBOUND}
\|\x_{\A}-\x^{n+1}\|^2 
\leq 2\left( \frac{1}{\mu\alpha}-1 \right)\|(\x_\A-\x^n) \|^2 + \frac{4}{\alpha}\|\e_\A^{n}\|^2. 
\end{equation}
We can now iterate the above expression. Using  $\epsilon^k = b \sum_{n=0}^{k-1} \|\y-\P_{\x^n}\x_\A\|^2 a^{k-1-n} $, where $b=4/\alpha$ and $a=2/(\mu\alpha)-2$, we get
\begin{equation}
\label{equation:chain4}
\|\x_{\A}-\x^{k}\|^2 
\leq \left(2\left(\frac{1}{\mu\alpha}-1\right)\right)^k\|\x_\A \|^2 + \epsilon^k. 
\end{equation}

Thus
\begin{eqnarray}
\|\x-\x^{k}\| &\leq& \sqrt{c^k\|\x_\A \|^2 + \epsilon^k + \|\x_{\A}-\x\|} \nonumber \\
&\leq& c^{k/2} \|\x_\A \| + \sqrt{\epsilon^k}+  \|\x_{\A}-\x\|, \nonumber
\end{eqnarray}
where $c=\frac{2}{\mu\alpha}-2$ and the theorem is proven.
\end{proof}

\subsection{Proof of Corollary \ref{col:1}}
\label{proof:2}
\begin{proof}
Let us start with the bound in  \eqref{ITERBOUND}
\begin{equation}
\|\x_{\A}-\x^{n+1}\|^2 
\leq 2\left( \frac{1}{\mu\alpha}-1 \right)\|(\x_\A-\x^n) \|^2 + \frac{4}{\alpha}\|\e_\A^{n}\|^2
\end{equation}
and let us look a bit more closely at
\begin{eqnarray}
\e_\A^{n}&=&\y-\P(\x^n)-\P_{\x^n}( \x_\A -\x^n )\nonumber \\
&=&\P(\x_\A)+\e_\A-\P(\x^n)-\P_{\x^n}( \x_\A -\x^n ),
\end{eqnarray}
where $\e_\A=\y-\P(\x_\A)$.
We then have
\begin{eqnarray}
\|\e_\A^{n}\| & \leq&\|\P(\x_\A)-\P(\x^n)-\P_{\x^n}( \x_\A -\x^n )\| + \|\e_\A\|
\end{eqnarray}
Now by assumption, $\|\P(\x_\A)-\P(\x^n)-\P_{\x^n}( \x_\A -\x^n )\|$ is bounded as a function of $\|\x_\A-\x^n\|$, i.e. 
\begin{equation}
\|\P(\x_\A)-\P(\x^n)-\P_{\x^n}( \x_\A -\x^n )\|^2\leq C \|\x_\A-\x^n\|^2,  
\end{equation}
so that \eqref{ITERBOUND} becomes
\begin{eqnarray}
\|\x_{\A}-\x^{n+1}\|^2 
\leq 2\left( \frac{1}{\mu\alpha}-1 + \frac{4}{\alpha} C \right)\|(\x_\A-\x^n) \|^2 + \frac{8}{\alpha} \|\e_\A\|^2,
\end{eqnarray}
Thus we require that $\frac{1}{\mu\alpha}-1 + \frac{4}{\alpha} C \leq0.5$, that is that
$1/\mu\leq1.5\alpha-4C$.

The same argument used in the main proof now holds. Whenever the constant before the left term on the right hand side is smaller than one, then we can iterate the error and the corollary follows.
\end{proof}

\section{The Iterative Hard Thresholding Algorithm for Non-Linear Optimisation}
\label{subsection:algo}

Let us now return to the more general problem of minimising a non-linear function  $f(\x)$ under the constraint that $\x\in\A$, where $\A$ is a Union of Subspaces.
Let us recall again that for minimisation problems of the form $\mathrm{argmin}_{\x\in\A} \|\y-\P\x\|^2$ we use the algorithm
\begin{equation}
\x^{n+1}=P_\A(\x^n+\P^*(\y-\P\x^{n}).
\end{equation}
Note that the update $\P^*(\y-\P\x^{n})$ is a scaled version of the gradient of the cost function $\|\y-\P\x\|^2$.

In the more general setting $\mathrm{argmin}_{\x\in\X} f(\x)$, where $\x$ is an Euclidean vector, we can simply replace this update direction with the gradient of $f(\x)$ (evalustaed at $\x^n$), whilst in more general spaces, we assume that $f(\x)$ is Fr\'echet  differentiable with respect to $\x$, that is, for each $\x_1$ there exist a linear functional $D_{\x_1}(\cdot)$ such that
\begin{equation}
\lim_{\h\rightarrow0} \frac{ f(\x_1+\h) - f(\x_1) - D_{\x_1}(\h)}{\|\h\|}=0.
\end{equation}
We can then use Riesz representation theorem to write the linear functional $D_{\x_1}(\cdot)$ using its inner product equivalent
\begin{equation}
D_{\x_1}(\cdot) =  \langle \nabla (\x_1), \cdot \rangle,
\end{equation}
where $\nabla (\x_1)\in\H$.
Using $\h=\x_2-\x_1$ we see that for each $\u$ and $\x_1$ we require the existence of a $\nabla (\x_1)$ such that
\begin{equation}
\label{equation:gradientdef}
\lim_{\x_2\rightarrow\x_1} \frac{ f(\x_2) - f(\x_1) - \langle \nabla (\x_1), (\x_2-\x_1)\rangle } {\|\x_2-\x_1\|_\H}=0.
\end{equation}

In Euclidean spaces the Fr\'echet derivative is obviously the differential of $f(\x)$ at $\x_1$, in which case $\nabla (\x_1)$ is the gradient and $\langle \cdot,\cdot\rangle$ the Euclidean inner product. With a slight abuse of terminology, we will therefore call $\nabla (\x_1)$ `the gradient' even in more general Hilbert space settings.

Having thus defined an update direction $\nabla (\x)$ in quite general spaces, we are now in a position to define an algorithmic strategy to optimise $f(\x)$.
We again use a version of our trusty Iterative Hard Thresholding algorithm, but replace the update direction with $\nabla (\x)$. With this modification, the algorithm might also be called the Projected Landweber Algorithm \cite{book00reg}, and is defined formally by the iteration
\begin{equation}
\x^{n+1} = P_{\A} (\x^n - (\mu/2)  \nabla (\x^n)),
\end{equation}
where $\x^0=\mathbf{0}$ and $\mu$ is a step size parameter chosen to satisfy the condition in Theorem \ref{thm:main2} below.

\subsection{Theoretical Performance Bound}

We now come to the second  main result of this paper, which states that, if $f(\x)$ satisfy the Restricted Strong Convexity Property, then the Iterative Hard Thresholding algorithm can find a vector $\x$ that is close to the true minimiser of $f(\x)$ among all $\x\in\A$.
In particular, we have the following theorem.
\begin{thm}
\label{thm:main2}
Let $\A$ be a union of subspaces. 
Given the optimisation problem $f(\x)$, where $f(\cdot)$ is a positive function that satisfies the  \emph{Restricted Strict Convexity Property} 
\begin{equation}
\alpha\leq\frac{f(\x_1) - f(\x_2) -Re\langle\nabla (\x_2),(\x_1-\x_2)\rangle }{\|\x_1-\x_2\|^2}\leq\beta,
\end{equation}
for all $\x_1,\x_2\in\H$ for which $\x_1-\x_2\in \A+\A+\A$. 
Let $\x_{opt} = \mathrm{argmin}_{\x\in\A}f(\x)$ and assume that $\beta\leq \frac{1}{\mu}\leq\frac{4}{3}\alpha$,
then, after
\begin{equation}
n^\star = 2 \frac{\ln\left(\delta \frac{f(\x_{opt})}{ \|\x_{opt}\|}\right)}{\ln{4(1-\mu\alpha)}},
\end{equation}
iterations, the IHT algorithm calculates a solution $\x^{n^\star}$ satisfying
\begin{eqnarray}
\|\x^{n^\star}-\x\| \leq \left(2 \sqrt{\frac{ \mu}{1-c}} +\delta\right) f(\x_{opt}) + \|\x-\x_{opt}\| .
\end{eqnarray}
\end{thm}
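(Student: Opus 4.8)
The plan is to follow the architecture of the proof of Theorem \ref{thm:main1} almost verbatim, replacing the squared form $\|\P_{\x^n}(\x_1-\x_2)\|^2$ used there by the Bregman-type quantity $D(\x_1,\x_2)=f(\x_1)-f(\x_2)-Re\langle\nabla(\x_2),\x_1-\x_2\rangle$. The Restricted Strict Convexity Property is exactly the statement $\alpha\|\x_1-\x_2\|^2\leq D(\x_1,\x_2)\leq\beta\|\x_1-\x_2\|^2$, and in the quadratic case $f(\x)=\|\y-\P\x\|^2$ one checks that $D(\x_1,\x_2)=\|\P(\x_1-\x_2)\|^2$, so RSC genuinely generalises the earlier RIP and $f(\x_{opt})$ takes over the role played by the noise floor $\|\e_\A\|^2$. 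As before I would open with the triangle inequality
\begin{equation}
\|\x-\x^{n+1}\|\leq\|\x_{opt}-\x^{n+1}\|+\|\x_{opt}-\x\|,
\end{equation}
so that everything reduces to driving $\|\x_{opt}-\x^{n+1}\|$ down through a contractive recursion, the term $\|\x_{opt}-\x\|$ being the irreducible model mismatch already present in the statement.

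The engine of the recursion is the optimality of the projection. Writing $\x^{n+1}=P_\A(\x^n-(\mu/2)\nabla(\x^n))$ and completing the square shows that $\x^{n+1}$ minimises over $\A$ the surrogate $C^n(\x)=f(\x^n)+Re\langle\nabla(\x^n),\x-\x^n\rangle+\tfrac1\mu\|\x-\x^n\|^2$; since $\x_{opt}\in\A$ this gives $C^n(\x^{n+1})\leq C^n(\x_{opt})$. The hypothesis $\beta\leq1/\mu$ makes $C^n$ a majoriser of $f$ through the upper RSC bound, so $f(\x^{n+1})\leq C^n(\x^{n+1})$, while the lower RSC bound applied to $C^n(\x_{opt})$ yields the function-value contraction
\begin{equation}
f(\x^{n+1})-f(\x_{opt})\leq\left(\tfrac1\mu-\alpha\right)\|\x^n-\x_{opt}\|^2.
\end{equation}
Converting this back into a distance with a second application of the lower RSC bound, and absorbing the cross terms with the same $2ab\leq a^2+b^2$ splittings used in (\ref{equation:chain2}) and (\ref{equation:chain3}), produces a recursion of the shape $\|\x_{opt}-\x^{n+1}\|^2\leq 4(1-\mu\alpha)\|\x_{opt}-\x^n\|^2+(\text{error})$, in which the factor $c=4(1-\mu\alpha)$ is strictly below one precisely because of the assumption $1/\mu\leq\tfrac43\alpha$.

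From here the endgame is identical to that of Theorem \ref{thm:main1}: iterate the recursion to obtain $\|\x_{opt}-\x^k\|^2\leq c^k\|\x_{opt}\|^2+\epsilon^k$ (using $\x^0=\mathbf{0}$), whence $\|\x_{opt}-\x^k\|\leq c^{k/2}\|\x_{opt}\|+\sqrt{\epsilon^k}$, and $\sqrt{\epsilon^k}$ converges to the $2\sqrt{\mu/(1-c)}\,f(\x_{opt})$ term by the same geometric-series estimate displayed just after Theorem \ref{thm:main1}. Demanding that the transient $c^{k/2}\|\x_{opt}\|$ drop below $\delta f(\x_{opt})$ gives exactly the stated $n^\star$, and substituting back into the triangle inequality closes the argument.

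The step I expect to be genuinely delicate is the conversion from the function-value gap back to $\|\x_{opt}-\x^{n+1}\|^2$, because the lower RSC bound produces the term $Re\langle\nabla(\x_{opt}),\x^{n+1}-\x_{opt}\rangle$. Since $\A$ is a non-convex union of subspaces, $\x_{opt}$ is only a \emph{constrained} minimiser, so $\nabla(\x_{opt})$ need not vanish and $\x^{n+1}-\x_{opt}$ need not be a feasible direction; controlling this inner product in terms of $f(\x_{opt})$ is what ultimately yields the floor in the final bound. Tracking which differences among the three relevant points $\x^n,\x^{n+1},\x_{opt}$ the RSC is applied to is also exactly where the hypothesis that RSC hold over $\A+\A+\A$, rather than merely $\A+\A$, becomes essential.
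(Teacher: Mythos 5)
Your opening move (the surrogate/majorisation argument giving the function-value contraction $f(\x^{n+1})-f(\x_{opt})\leq(\tfrac1\mu-\alpha)\|\x^n-\x_{opt}\|^2$) is correct, and it is a genuinely different engine from the one in the paper. But the step you yourself flag as ``delicate'' --- converting that function-value gap back into the distance recursion --- is a genuine gap, not a detail: applying the lower RSC bound with base point $\x_{opt}$ produces the term $-Re\langle\nabla(\x_{opt}),\x^{n+1}-\x_{opt}\rangle$, and nothing in your proposal controls it. Since $\A$ is a non-convex union of subspaces, $\nabla(\x_{opt})$ need not vanish and no first-order optimality condition is available in the direction $\x^{n+1}-\x_{opt}$; asserting that this inner product ``ultimately yields the floor'' is precisely the claim that needs a proof, and without it the recursion, its contraction factor $4(1-\mu\alpha)$, and the floor $2\sqrt{\mu/(1-c)}\,f(\x_{opt})$ are all unsubstantiated. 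Note also that via your route the natural contraction factor would be of the form $(\tfrac1\mu-\alpha)/\alpha$ rather than $4(1-\mu\alpha)$ (the $4$ in the paper comes from the projection step, which your argument never uses), so the stated constants would not emerge even if the gap were closed.

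The paper's proof avoids the problem entirely by never invoking RSC at $\x_{opt}$ as base point and never passing through $f(\x^{n+1})$. It introduces a subspace $\Gamma\subset\A+\A+\A$ containing $\x^n,\x^{n+1},\x_{opt}$, bounds $\|\x^{n+1}-\x_{opt}\|\leq 2\|\x^n-(\mu/2)\nabla_\Gamma(\x^n)-\x_{opt}\|$ by projection optimality plus the triangle inequality, expands the square, and applies the lower RSC bound only for the pair $(\x_{opt},\x^n)$. The resulting gradient term $\mu^2\|\nabla_\Gamma(\x^n)\|^2$ is then killed by the key Lemma \ref{lem:gradbound}, $\|\tfrac{\mu}{2}\nabla_\Gamma(\x^n)\|^2\leq\mu f(\x^n)$, which follows from the \emph{positivity} of $f$ together with the upper RSC bound applied to the pair $(\x^n-\tfrac{\mu}{2}\nabla_\Gamma(\x^n),\,\x^n)$; the $f(\x^n)$ contributions cancel and only $4\mu f(\x_{opt})$ survives as the floor. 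This lemma is the missing idea in your proposal, and it is also the real reason the RSC must hold on $\A+\A+\A$ rather than $\A+\A$: the difference $\tfrac{\mu}{2}\nabla_\Gamma(\x^n)$ lies in the three-fold sum $\Gamma$, whereas differences of pairs of points of $\A$ only require $\A+\A$ (your stated explanation of this hypothesis is therefore off the mark). Your route could probably be salvaged by proving the same lemma at $\x_{opt}$ to get $\|\nabla_\Gamma(\x_{opt})\|\leq 2\sqrt{f(\x_{opt})/\mu}$ and then using Cauchy--Schwarz with a Young splitting to absorb the offending inner product, but that yields a different (in fact, under the stated step-size condition, still contractive) recursion with different constants --- it would prove a variant of the theorem, not the theorem as stated.
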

In the traditional Compressed Sensing setting, this result is basically that derived in \cite{blumensath09IHT}.

\subsection{Proof of the Second Main Result}

\begin{proof}[Proof of Theorem \ref{thm:main2}]

The proof requires the orthogonal projection onto a subspace $\Gamma$. The subspace $\Gamma$ is defined as follows. Let $\Gamma$ be the sum of no more than three subspaces of $\A$, such that $\x_{opt},\x^n,\x^{n+1}\in\Gamma$. Let $P_\Gamma$ be the orthogonal projection onto the subspace $\Gamma$. We write $\a^n_\Gamma=P_\Gamma\a^n$ and $P_\Gamma\nabla(\x^n) = \nabla_\Gamma(\x^n)$. Note that this ensures that $P_\Gamma\x^{n}=\x^{n}$, $P_\Gamma\x^{n+1} =  \x^{n+1}$ and $P_\Gamma \x_{opt} =  \x_{opt}$. 

We note for later that with this notation
\begin{eqnarray}
Re\langle \nabla_\Gamma(\x^n),(\x_{opt}-\x^{n}) \rangle &=& Re\langle P_\Gamma\nabla(\x^n),(\x_{opt}-\x^{n}) \rangle \nonumber\\
&=& Re\langle \nabla(\x^n),P_\Gamma(\x_{opt}-\x^{n}) \rangle \nonumber\\
&=& Re\langle \nabla(\x^n),(\x_{opt}-\x^{n}) \rangle
\end{eqnarray}
and
\begin{eqnarray}
\|\nabla_\Gamma(\x^n)\|^2=\langle \nabla_\Gamma(\x^n),\nabla_\Gamma(\x^n)\rangle 
&=& \langle  P_\Gamma\nabla(\x^n),P_\Gamma\nabla(\x^n)\rangle \nonumber \\
&=& \langle  \nabla(\x^n),P_\Gamma^*P_\Gamma\nabla(\x^n)\rangle \nonumber \\
&=& \langle  \nabla(\x^n),\nabla_\Gamma(\x^n)\rangle.
\end{eqnarray}

We also need the following lemma.
\begin{lem}
\label{lem:gradbound}
Under the assumptions of the theorem,
\begin{equation}
\| \frac{\mu}{2}\nabla_\Gamma(\x^n)\|^2 - \mu f(\x^n)\leq 0.
\end{equation}
\end{lem}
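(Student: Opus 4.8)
The plan is to derive the inequality as a single-step descent bound: I apply the upper half of the Restricted Strict Convexity property at one cleverly chosen test point, then discard a nonnegative function value using the positivity of $f$, and finally absorb the leftover quadratic term using the step-size condition $\beta\mu\le 1$.

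Concretely, I would set $\x_1=\x^n-\frac{\mu}{2}\nabla_\Gamma(\x^n)$ and $\x_2=\x^n$ in the upper RSC bound
\[
f(\x_1)-f(\x^n)-Re\langle\nabla(\x^n),\x_1-\x^n\rangle\le\beta\|\x_1-\x^n\|^2 .
\]
Here $\x_1-\x^n=-\frac{\mu}{2}\nabla_\Gamma(\x^n)$, and the cross term collapses using the identity recorded just before the lemma, namely $\langle\nabla(\x^n),\nabla_\Gamma(\x^n)\rangle=\|\nabla_\Gamma(\x^n)\|^2$ (which is real), so that $Re\langle\nabla(\x^n),\x_1-\x^n\rangle=-\frac{\mu}{2}\|\nabla_\Gamma(\x^n)\|^2$. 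Substituting and rearranging then yields
\[
f(\x_1)\le f(\x^n)-\tfrac{\mu}{2}\|\nabla_\Gamma(\x^n)\|^2+\beta\tfrac{\mu^2}{4}\|\nabla_\Gamma(\x^n)\|^2 .
\]

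Because $f$ is positive, $f(\x_1)\ge 0$, and the display becomes $\left(\frac{\mu}{2}-\beta\frac{\mu^2}{4}\right)\|\nabla_\Gamma(\x^n)\|^2\le f(\x^n)$. The hypothesis $\beta\le\frac{1}{\mu}$ gives $\beta\mu\le 1$, hence $\beta\frac{\mu^2}{4}\le\frac{\mu}{4}$, so the coefficient on the left is at least $\frac{\mu}{4}$; therefore $\frac{\mu}{4}\|\nabla_\Gamma(\x^n)\|^2\le f(\x^n)$. Multiplying through by $\mu$ rewrites this as $\|\frac{\mu}{2}\nabla_\Gamma(\x^n)\|^2\le\mu f(\x^n)$, which is exactly the claim.

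The one point that genuinely needs care — and the only place the union-of-subspaces geometry enters — is verifying that the RSC inequality is applicable at my test point, i.e. that $\x_1-\x_2=-\frac{\mu}{2}\nabla_\Gamma(\x^n)$ lies in $\A+\A+\A$. This is precisely why the lemma is phrased in terms of the \emph{projected} gradient $\nabla_\Gamma(\x^n)=P_\Gamma\nabla(\x^n)$ rather than the full gradient: by construction $\Gamma$ is a sum of at most three subspaces of $\A$, so $\nabla_\Gamma(\x^n)\in\Gamma\subseteq\A+\A+\A$ and the perturbation is admissible. Had I perturbed along $\nabla(\x^n)$, the difference need not lie in the allowed set and the RSC bound could not be invoked. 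Recognising that the projection simultaneously supplies feasibility and simplifies the cross term is the crux of the argument; the remaining manipulations are routine.
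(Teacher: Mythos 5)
Your proof is correct and follows essentially the same route as the paper's: both apply the upper Restricted Strict Convexity bound at the test point $\x^n-\frac{\mu}{2}\nabla_\Gamma(\x^n)$, collapse the cross term via the identity $\langle\nabla(\x^n),\nabla_\Gamma(\x^n)\rangle=\|\nabla_\Gamma(\x^n)\|^2$, drop the nonnegative value $f\bigl(\x^n-\frac{\mu}{2}\nabla_\Gamma(\x^n)\bigr)$ using positivity of $f$, and finish with $\mu\beta\leq 1$, differing only in trivial algebraic bookkeeping. Your explicit verification that $-\frac{\mu}{2}\nabla_\Gamma(\x^n)\in\Gamma\subseteq\A+\A+\A$, so that the RSC bound is admissible, is a point the paper leaves implicit and is a welcome clarification.
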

\begin{proof}
Using the \emph{Restricted Strict Convexity Property} we have
\begin{eqnarray}
\| \frac{\mu}{2}\nabla_\Gamma(\x^n)\|^2 
&=& - \frac{\mu}{2}Re\langle \nabla(\x^n),-\frac{\mu}{2}\nabla_\Gamma(\x^n)\rangle \nonumber \\
& \leq & \frac{\mu}{2} \beta \|\frac{\mu}{2}\nabla_\Gamma(\x^n)\|^2 + \frac{\mu}{2}f(\x^n) - \frac{\mu}{2} f(x^n-\frac{\mu}{2}\nabla_\Gamma(\x^n))) \nonumber\\
& \leq & \frac{\mu}{2} \beta \|\frac{\mu}{2}\nabla_\Gamma(\x^n)\|^2 + \frac{\mu}{2}f(\x^n). 
\end{eqnarray}
Thus
\begin{eqnarray}
(2-\mu\beta)\|\frac{\mu}{2}\nabla_\Gamma(\x^n)\|^2 & \leq &  \mu f(\x^n), 
\end{eqnarray}
which is the desired result as $\mu\beta\leq 1$ by assumption.
\end{proof}

To prove the theorem, we start by bounding the distance between the current estimate $\x^{n+1}$ and the optimal estimate $\x_{opt}$. Let $\a^n_\Gamma=\x^{n}_\Gamma-\mu/2\nabla_\Gamma(\x^n)$. Because $\x^{n+1}$ is the closest element in $\A$ to $\a^n_\Gamma$, we have
\begin{eqnarray}
\|\x^{n+1}-\x_{opt}\|^2 
& \leq & \left(\|\x^{n+1} - \a^n_\Gamma\|+\|\a^n_\Gamma-\x_{opt}\| \right)^2 \nonumber \\
&\leq&  4\|(\a^n_\Gamma-\x_{opt})\|^2 \nonumber \\
& =&  4 \|\x^{n}-(\mu/2)\nabla_\Gamma(\x^n)-\x_{opt}\|^2    \nonumber \\ 
& = &4 \|(\mu/2)\nabla_\Gamma(\x^n)+(\x_{opt}-\x^{n})\|^2 \nonumber \\
& = & \mu^2 \|\nabla_\Gamma(\x^n)\|^2 + 4 \|\x_{opt}-\x^{n}\|^2 +  
4\mu Re\langle \nabla_\Gamma(\x^n),(\x_{opt}-\x^{n}) \rangle  \nonumber \\
& = & \mu^2 \|\nabla_\Gamma(\x^n)\|^2 + 4 \|\x_{opt}-\x^{n}\|^2 +  
4\mu Re\langle \nabla(\x^n),(\x_{opt}-\x^{n}) \rangle   \nonumber \\
& \leq &4   \|\x_{opt}-\x^{n}\|^2 + \mu^2 \|\nabla_\Gamma(\x^n)\|^2    \nonumber \\
& & +4\mu [-\alpha\|\x^n-\x_{opt}\|^2 +f(\x_{opt})-f(\x^n)]   \nonumber \\
& = & 4(1-\mu\alpha) \|\x_{opt}-\x^{n}\|^2 +4\mu f(\x_{opt})    \nonumber \\
& & + 4  [\|(\mu/2)\nabla_\Gamma(\x^n)\|^2 - \mu f(\x^n)]\nonumber \\
& \leq & 4(1-\mu\alpha) \|\x_{opt}-\x^{n}\|^2 +4\mu f(\x_{opt})  .
\end{eqnarray}
Here, the second to last inequality is the RSCP and the last inequality is due to lemma \ref{lem:gradbound}.

We have thus shown that
\begin{eqnarray}
\|\x^{n+1}-\x_{opt}\|^2 \leq 4(1-\mu\alpha) \|\x_{opt}-\x^{n}\|^2 + 4\mu f(\x_{opt}) .
\end{eqnarray}
Thus, with $c=4(1-\mu\alpha)$
\begin{eqnarray}
\|\x^{n}-\x_{opt}\|^2 \leq c^n \|\x_{opt}\|^2 + \frac{ 4\mu}{1-c} f(\x_{opt})  ,
\end{eqnarray}
so that, if $\frac{1}{\mu}<\frac{4}{3}\alpha$ we have $c=4(1-\mu\alpha)<1$, so that $c^n$ decreases with $n$.
Taking the square root on both sides and noting that for positive $a$ and $b$, $\sqrt{a^2+b^2}\leq a + b$,
\begin{eqnarray}
\|\x^{n}-\x_{opt}\| \leq c^{n/2} \|\x_{opt}\| + 2 \sqrt{\frac{ \mu}{1-c}} f(\x_{opt})  .
\end{eqnarray}

The theorem then follows using the triangle inequality
\begin{eqnarray}
\|\x^{n}-\x\| &\leq& \|\x^{n}-\x_{opt}\| + \|\x-\x_{opt}\|\nonumber \\
 & \leq & c^{n/2} \|\x_{opt}\| + 2 \sqrt{\frac{ \mu}{1-c}} f(\x_{opt}) \nonumber \\
 & &+ \|\x-\x_{opt}\|.
\end{eqnarray}
The iteration count is found by setting 
\begin{equation}
c^{n/2} \|\x_{opt}\| \leq \delta (\x_{opt}).
\end{equation}
so that after
\begin{equation}
n = 2 \frac{\ln\left(\delta \frac{f(\x_{opt})}{ \|\x_{opt}\|}\right)}{\ln{c}},
\end{equation}
iterations
\begin{eqnarray}
\|\x^n-\x\| \leq \left(2 \sqrt{\frac{ \mu}{1-c}} +\delta\right) f(\x_{opt}) + \|\x-\x_{opt}\|   .
\end{eqnarray}
\end{proof}


\subsection{When and Where is this Theory Applicable?}

Since we first derived the result here, it has been shown that properties such as the Restricted Strict Convexity Property do indeed hold for certain non-linear functions such as those encountered in certain logistic regression problems \cite{Bahmani12nonlin}. These recent findings thus further strengthen the case for a detailed study of non-convexly constrained non-linear problems and the derivation of novel methodologies for their solution. 

It may thus seem tempting to use this theory also in a non-linear Compressed Sensing setting, where we would have $f(\x)=\|\y-\P(\x)\|_B^2$, where $\|\cdot\|_\B$ is some Banach space norm and where $\P(\cdot)$ is some non-linear function\footnote{This was indeed the setting proposed in  \cite{blumensath10nonlin}.}. If this $f(\x)$ would satisfy the Restricted Strict Convexity property, then the Theory in the second part of this paper would indeed tell us how to solve the non-linear Compressed Sensing problem.

Unfortunately, it is far from clear yet under which conditions on $f(\x)=\|\y-\P(\x)\|_B^2$ Restricted Strict Convexity type properties hold. 
Indeed, the following lemma shows that such a condition cannot be fulfilled in general for Hilbert spaces.
\begin{lem}
Assume $\B$ is a Hilbert space and assume $f(\x)$ is convex on $\A+\A$ for all $\y$ (i.e. it Satisfies the Restricted Strict Convexity Property), then $\P$ is affine on all subspaces of $\A+\A$.
\end{lem}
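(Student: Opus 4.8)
The plan is to exploit the fact that convexity is assumed \emph{for every} $\y$, which is a far stronger hypothesis than convexity for a single $\y$. Expanding in the Hilbert space $\B$,
\begin{equation}
f(\x)=\|\y-\P(\x)\|_\B^2=\|\y\|^2-2Re\langle\y,\P(\x)\rangle+\|\P(\x)\|^2,
\end{equation}
the dependence on $\y$ enters only through the cross term $-2Re\langle\y,\P(\x)\rangle$, which is \emph{linear} in $\P(\x)$. By letting $\y=s\u$ for a fixed direction $\u\in\B$ and a free real scalar $s$, this term becomes $-2s\,Re\langle\u,\P(\x)\rangle$, and the idea is that the only way a function of the form $\|\P(\x)\|^2-2s\,Re\langle\u,\P(\x)\rangle$ can remain convex as $s\to\pm\infty$ is for $Re\langle\u,\P(\cdot)\rangle$ to be affine.

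To make this precise I would fix an arbitrary subspace $V\subseteq\A+\A$, points $\x_1,\x_2\in V$ and $t\in[0,1]$, and set $\x_t=t\x_1+(1-t)\x_2$. Since $V$ is a subspace contained in $\A+\A$, the whole segment lies in $\A+\A$, so the convexity inequality $f(\x_t)\leq t f(\x_1)+(1-t)f(\x_2)$ furnished by the Restricted Strict Convexity Property applies. Substituting $\y=s\u$ and noting that the $s^2\|\u\|^2$ contributions cancel (their coefficients are $1$ on both sides), the inequality collapses to the linear relation
\begin{equation}
A\leq 2sB,\qquad A:=\|\P(\x_t)\|^2-t\|\P(\x_1)\|^2-(1-t)\|\P(\x_2)\|^2,
\end{equation}
with $B:=Re\langle\u,\,\P(\x_t)-t\P(\x_1)-(1-t)\P(\x_2)\rangle$. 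Because $A$ and $B$ are independent of $s$ while $s$ ranges over all of $\R$, the bound can hold for both $s\to+\infty$ and $s\to-\infty$ only if $B=0$ (and then merely $A\leq 0$, which is automatic once affinity is established).

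Having forced $B=0$ for every $\u\in\B$, I would conclude by choosing $\u=\P(\x_t)-t\P(\x_1)-(1-t)\P(\x_2)$, so that $Re\langle\u,\u\rangle=\|\u\|^2=0$ gives $\P(\x_t)=t\P(\x_1)+(1-t)\P(\x_2)$; since $\x_1,\x_2\in V$ and $t$ were arbitrary, $\P$ is affine on $V$, and $V\subseteq\A+\A$ was an arbitrary subspace. The one delicate point is the bookkeeping around the domain: one must verify that ``convex on $\A+\A$ for all $\y$'' really delivers the segment inequality along $V$, which is precisely why restricting to a subspace $V$ (convex and closed under convex combinations) rather than to the non-convex union $\A+\A$ itself is essential. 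Everything else is a one-line manipulation; the entire substance sits in the ``for all $\y$'' quantifier, which is what upgrades a mere convexity hypothesis into the rigid affine conclusion.
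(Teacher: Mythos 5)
Your proof is correct and follows essentially the same route as the paper's: both expand the squared Hilbert norm, observe that the convexity inequality is affine in $\y$ with the cross term $-2Re\langle\y,\cdot\rangle$ linear in $\P$, and exploit the ``for all $\y$'' quantifier by sending $\y$ to infinity along the discrepancy direction to force $\P(\x_t)=t\P(\x_1)+(1-t)\P(\x_2)$. The paper phrases this as a contradiction with a single well-chosen $\y$ (a large negative multiple of the discrepancy vector) and uses a general convex combination rather than two points, but the mechanism is identical.
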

\begin{proof}
The proof was suggested by an anonymous reviewer of the earlier version of this manuscript \cite{blumensath10nonlin} and uses contradiction. Assume $\P$ is not affine on any subspace of $\A+\A$. Thus, there is a subspace $\S=\A_i+\A_j$, and $\x_n\in\S$, such that for $\x=\sum_n \lambda_n\x_n$, where $\sum_n\lambda_n=1$ and $0\leq \lambda_n$, we have $\sum_n\P(\x_n)-\P(\x)\neq \mathbf{0}$.
Now by assumption of strong convexity on $\S$, we have (using $\y_n=\P(\x_n)$ and $-\bar{\y}=\x$)
\begin{eqnarray}
\mathbf{0}\leq \sum_n\lambda_n \|\y-\P(\x_n)\|^2-\|\y-\P(\x)\|^2= \sum_n\lambda_n \|\y-\y_n\|^2-\|\y-\bar{\y}\|^2 \nonumber \\
= 2\langle \y ,\bar\y- \sum_n \lambda_n\y_n  \rangle + \sum_n\lambda_n \|\y_n\|^2-\|\bar\y\|^2. 
\end{eqnarray}
where the inequality is due to the assumption of convexity. 
But the above inequality cannot hold for all $\y$ (it fails for example for a multiple of $-(\bar\y- \sum_n\lambda_n\y_n)$). Thus $\P$ needs to be affine on the linear subsets of $\A+\A$. 
\end{proof}
Whilst this implies that the property cannot hold in Hilbert spaces for non-affine $\P$ and \emph{all} $\y$, it does not preclude the possibility that it could hold for specific observations $\y$. This would not allow us to build a general signal recovery framework, but might still allow us the recovery of a subset of signals. Thus, for the non-linear Compressed Sensing problem in Hilbert space, the Restricted Isometry Property of the Jacobian of $\P(\x)$ together with the ability to construct a good linear approximation of  $\P(\x)$ seem to be the more suitable tools  to study recovery performance. Nevertheless, for certain other non-convexly constrained non-linear optimisation problems, such as those addressed in \cite{Bahmani12nonlin}, the Restricted Strict Convexity Property might be the more appropriate framework.  Whilst there are many similarities between these requirements and they both boil down to the same RIP property in the linear setting, it remains to be seen what the exact relationship is between these two measures in general non-linear problems.


\section{Conclusions}

Compressed Sensing ideas can be developed in much more general settings than considered traditionally. We have shown previously \cite{blumensath10subsp} that sparsity is not the only structure that allows signals to be recovered and that the finite dimensional setting can be replaced with a much more general Hilbert space framework. 
In this paper we have made a further important generalisation and have introduced the concept of non-linear measurements into Compressed Sensing theory. Under certain conditions, such as the requirement that the Jacobian of the measurement system satisfies a Restricted Isometry Property, then the Iterative Hard Thresholding algorithm can be used to recover signals from a non-convex constraint set with similar error bounds to those derived in Compressed Sensing.

In the second part of this paper we have then looked at the related and in some sense more general setting of non-linear optimisation under non-convex constraints. Here we have looked the Restricted Strict Convexity Property as a tool to study recovery performance and it was shown that that this condition is indeed sufficient for the Iterative Hard Thresholding to find points that are near the optimal solution.


\section*{Acknowledgment}
\noindent This work was supported in part by the UKÕs Engineering and Physical Science Research Council grants EP/J005444/1 and D000246/1 and a Research Fellowship from the School of Mathematics at the University of Southampton.


\begin{thebibliography}{99}



\bibitem{candes06robust}
E.~Cand\`es, J.~Romberg, and T.~Tao, ``Robust uncertainty principles: Exact
  signal reconstruction from highly incomplete frequency information,'' {\em
  IEEE Transactions on Information Theory}, vol.~52, pp.~489--509, 2006.
  
  \bibitem{donoho06formost}
D.L. Donoho, 
\newblock ``For most large underdetermined systems of linear equations the
  minimal 1-norm solution is also the sparsest solution,''
\newblock {\em Communications on Pure and Applied Mathematics}, vol. 59(6) pp.~797--829, 2006 

\bibitem{candes08cs}
E. Cand\`es and M. B. Wakin
\newblock ``An introduction to compressive sampling,''
\newblock {\em IEEE Signal Processing Magazine}, vol. 25(2), pp. 21--30, 2008
  
\bibitem{candes08RIP}
E.~Cand\`es, ``The restricted isometry property and its implications for compressed sensing.,'' {\em Compte Rendus de l'Academie des Sciences}, Paris, Serie I, 346 589--592, 2008.

\bibitem{needell08COSAMP}
D.~Needell and J.~Tropp, ``{COSAMP}: {I}terative signal reovery from incomplete
  and inacurate samples.,'' {\em Appl. Comp. Harmonic Anal}, vol. 26, pp. 301--321, 2008.
  
\bibitem{dai08subspace}
W.~Dai and O.~Milenkovic, ``Subspace pursuit for compressed sensing: Closing
  the gap between performance and complexity,'' {\em IEEE Transactions on Information Theory}, vol. 55, no. 5, pp. 2230--2249, 2009.
  
\bibitem{blumensath09IHT}
T.~Blumensath and M.E.~Davies
\newblock ``Iterative Hard Thresholding for Compressed Sensing,''
\newblock {\em Applied and Computational Harmonic Analysis}, vol. 27, no. 3, pp. 265-274, 2009

\bibitem{mishali09blind}
M.~Mishali, Y.C.~Eldar
\newblock ``Blind Multi-Band Signal Reconstruction: Compressed Sensing for Analog
  Signals,''
\newblock {\em IEEE Transactions on Signal Processing}, vol. 57(3), pp. 993--1009, 2009

\bibitem{goldfarb10convergence}
D.~Goldfarb, S.~Ma
\newblock ``Convergence of fixed point continuation algorithms for matrix rank minimization,''
\newblock {\em arXiv:09063499v3}, 2010.

\bibitem{baraniuk09model}
R.~Baraniuk, V.~Cevher, M.~Duarte, and C.~Hegde, 
\newblock ``Model-based compressive sensing,''
\newblock{\em IEEE Transactions on Information Theory}, vol 56(4), pp. 1982--2001, 2010 

\bibitem{blumensath10nonlin}
T. Blumensath
\newblock ``Compressed Sensing with non-linear observations,''
\newblock {\em http://eprints.soton.ac.uk/164753/}, Oct. 2010 

\bibitem{Xu11nonlin}
W. Xu, M. Wang, A. Tang
\newblock ``Sparse Recovery from Nonlinear Measurements with Applications in Bad Data Detection for Power Networks,''
\newblock {\em arXiv:1112.6234v1}, 2011

\bibitem{Bahmani12nonlin}
Bahmani S., Raj B., Boufounos P.
\newblock ``Greedy Sparsity-Constrained Optimization,''
\newblock {\em arXiv:1203.5483v1}, 2012


\bibitem{blumensath10subsp}
T.~Blumensath,
\newblock ``Sampling and reconstructing signals from a union of subspaces,''
\newblock  {\em IEEE TRans. on Information Theory}, vol. 57(7), pp. 4660--4671, 2011



\bibitem{cevher11clash}
A. Kyrillidis and V. Cevher
\newblock "Combinatorial Selection and Least Absolute Shrinkage via the CLASH Algorithm"
\newblock{\em preprint}, 2011


\bibitem{book00reg}
H.~W. Engl, M.~Hanke, and A.~Neubauer,
\newblock {\em Regularization of Inverse Problems},
\newblock Kluwer Academic Publishers, 2000.


\bibitem{Negahban09unified}
S. Negahban, P. Ravikumar, M. J. Wainwright and B. Yu. 
\newblock ``A unified framework for the analysis of regularized $M$-estimators',' 
\newblock in {\em Advances in Neural Information Processing Systems}, December, 2009. Vancouver. Canada.






%


















\end{thebibliography}

\end{document}